\newcommand\vldbdoi{XX.XX/XXX.XX}
\newcommand\vldbpages{XXX-XXX}
\newcommand\vldbvolume{14}
\newcommand\vldbissue{1}
\newcommand\vldbyear{2020}
\newcommand\vldbauthors{\authors}
\newcommand\vldbtitle{\shorttitle} 
\newcommand\vldbavailabilityurl{http://vldb.org/pvldb/format_vol14.html}
\newcommand\vldbpagestyle{plain} 
\begin{document}
\title{On Distributed Algorithms for Minimum Dominating Set problem, from theory to application}

\author{Sharareh Alipour}
\affiliation{%
  \institution{Institute for research in fundamental sciences(IPM)}
  \streetaddress{P.O. Box 1212}
  \city{Tehran}
  \state{Iran}
}
\email{alipour@ipm.ir}

\author{Ehsan Futuhi}
\affiliation{%
  \institution{Amirkabir university}
  \city{Tehran}
  \country{Iran}
}
\email{futuhiehsan@gmail.com}

\author{Shayan Karimi}
\affiliation{%
  \institution{Amirkabir university}
  \city{Tehran}
  \country{Iran}
}
\email{shayankarimi1376@yahoo.com}

\begin{abstract}
In this paper, we propose a distributed algorithm for the minimum dominating set problem. 
For some especial networks, we prove theoretically that the achieved answer by our proposed algorithm is a constant approximation factor of the exact answer. 
This problem arises naturally in social networks, for example in news spreading, avoiding rumor spreading and recommendation spreading.
So we implement our algorithm on massive social networks and compare our results with the state of the art algorithms.
Also, we extend our algorithm to solve the $k$-distance dominating set problem and experimentally study the efficiency of the proposed algorithm.

Our proposed algorithm is fast and easy to implement and can be used in dynamic networks where the edges and vertices are added or deleted constantly. More importantly, based on the experimental results the proposed algorithm has reasonable solutions and running time which enables us to use it in distributed model practically.
\end{abstract}

\maketitle

\pagestyle{\vldbpagestyle}
\begingroup\small\noindent\raggedright\textbf{PVLDB Reference Format:}\\
\vldbauthors. \vldbtitle. PVLDB, \vldbvolume(\vldbissue): \vldbpages, \vldbyear.\\
\href{https://doi.org/\vldbdoi}{doi:\vldbdoi}
\endgroup
\begingroup
\renewcommand\thefootnote{}\footnote{\noindent
This work is licensed under the Creative Commons BY-NC-ND 4.0 International License. Visit \url{https://creativecommons.org/licenses/by-nc-nd/4.0/} to view a copy of this license. For any use beyond those covered by this license, obtain permission by emailing \href{mailto:info@vldb.org}{info@vldb.org}. Copyright is held by the owner/author(s). Publication rights licensed to the VLDB Endowment. \\
\raggedright Proceedings of the VLDB Endowment, Vol. \vldbvolume, No. \vldbissue\ %
ISSN 2150-8097. \\
\href{https://doi.org/\vldbdoi}{doi:\vldbdoi} \\
}\addtocounter{footnote}{-1}\endgroup

\ifdefempty{\vldbavailabilityurl}{}{
\vspace{.3cm}
\begingroup\small\noindent\raggedright\textbf{PVLDB Artifact Availability:}\\
The source code, data, and/or other artifacts have been made available at \url{\vldbavailabilityurl}.
\endgroup
}
\section{Introduction}
Nowadays online social networks are growing exponentially and they have important effect on our daily life. They influence politics and economics. Online shopping, online advertisement and social medias have important role in our life style. Even we communicate more with friends and family on social networks than face to face meetings.

To influence the network participants a key feature in a social network is the ability to communicate quickly within the network. For example, in an emergency situation, we may need to be able to reach to all network nodes, but only a small number of individuals in the network can be contacted directly due to the time or other constraints. However, if all nodes from the network are connected to at least one such individual who can be contacted directly (or is one of those individuals) then the emergency message can be quickly sent to all network participants.
Or suppose that in a social network some nodes should be selected to distribute a news in the network or should be selected to avoid spreading rumors by checking the spreading news in the network.
In these scenarios the goal is to choose the minimum number of such nodes. 
In graph theory this problem is called minimum dominating set problem.

Given a graph $G = (V,E)$  with the vertex set $V$ and the edge set $E$, a subset $S \subseteq V$ is a dominating set for $G$ if each node $v\in V\backslash S$ has a neighbor in $S$.
Also $S\subseteq V$ is a total dominating set of $G$ if each node $v\in V$ has a neighbor in $S$. 
Let $\gamma(G)$ and  $\gamma_t(G)$ be the size of a minimum dominating set (MDS) and  a minimum total dominating set (MTDS) for a graph $G$, respectively.
It is easy to see that if $G$ has no isolated vertices then $\gamma(G)\leq \gamma_t(G)\leq2\gamma(G)$. 
Dominating set is an important concept in graph theory which arises in many areas. When the given network is very large such as social networks, we have limitations in memory and time. Sometimes we need to run the algorithm on distributed model.
In a distributed model the network is abstracted as a simple $n$-node undirected graph $G = (V,E)$. There is one processor on each graph node $v \in V$, with a unique $\Theta(log n)$-bit identifier $ID(v)$, who initially knows only its neighbors in $G$. Communication happens in synchronous rounds. Per round, each node can send one, possibly different, $O(\log n)$-bit message to each of its neighbors. At the end, each node should know its own part of the output. For instance, when computing the dominating set, each node knows whether it is in the dominating set or has a neighbor in the dominating set.  

An extension of minimum dominating set  problem is minimum $k$-distance dominating set problem where the goal is to choose a subset $S\subseteq V$ with minimum cardinality such that for every vertex 
$v\in V\backslash S$, there is a vertex $u\in S$ such that there is a path between them of length at most $k$. The minimum total $k$-distance dominating set is defined similarly. 
In social networks the minimum $k$-distance dominating set can be considered as social recommenders. 
The close nodes influence each other and they have the same preferences in a network.
Suppose that we want to give a recommendation on a special product (e.g. which movie to watch) to each node of the network but we can not reach all of them because of the time constraint and advertising cost. We may choose minimum number of nodes such that they dominate all other nodes within distance $k$ from them. We give a recommendation to each of the selected nodes and then they spread it in the network. This is equal to solving the $k$-distance dominating set problem. For more on social recommendation see \cite{avn}. 

\subsection{Previous works}

Finding a minimum dominating set is NP-complete \cite{karp}, even for planar graphs of maximum degree $3$ \cite{gar},
and cannot be approximated for general graphs with a constant ratio under the assumption $P\neq NP$ \cite{raz}.
An $O(log n)$-approximation factor can be found by using a simple greedy algorithm. Moreover, negative results have been proved for the approximation of MDS even when limited to the power law graphs \cite{gas}.

A number of works have been done on exact algorithms for MDS, which mainly focus on improving the upper bound of running time. State of the art exact algorithms for MDS are based on the branch and reduce paradigm and can achieve a run time of $O({1.4969}^n)$ \cite{van}. Fixed parameterized algorithms have allowed to obtain better complexity results \cite{kar}. The main focus of such algorithms is on theoretical aspects.
In the distributed model it is known that finding small dominating sets in local model is hard. Kuhn and Wattenhofer \cite{kuhn} proposed an algorithm with an approximation factor that is  a function of the number of communication rounds. And the number of communication rounds also depends on $\Delta$. 
For more theoretical results on distributed algorithms for MDS problem see \cite{akh}.
 In \cite{hike} it has been shown that for any $\epsilon>0$ there is no deterministic local algorithm that finds a $(7-\epsilon)$-
approximation of a minimum dominating set for planar graphs. However, there exist an algorithm with approximation factor of $52$ for computing a MDS in planar graphs \cite{and,chris} in local model and an algorithm with approximation factor of $636$ for anonymous networks \cite{and,waw}. In \cite{ali}, they improved the approximation factor in anonymous networks to $18$ in planar graphs without $4$-cycles. For more information on local algorithms see \cite{suo}.

In practice, these theoretical algorithms are not applicable specially in large social networks because of time and space constraints. So we need to use heuristic algorithms  to obtain solutions. See \cite{san} for a  comparison among several greedy heuristics for MDS. 
Heuristic search methods such as genetic algorithm \cite{her} and ant colony optimization \cite{pot11,pot13} have been developed to solve MDS. Also Hyper metaheuristic algorithms combine different heuristic search algorithms and preprocessing techniques to obtain better performance \cite{pot13,sach,blum,gen,abu}. 
These algorithms were tested on standard benchmarks with up to thousand vertices. The configuration checking (CC) strategy \cite{cai11} has been applied to MDS and led to two local search algorithms. Wang et al. proposed the CC2FS algorithm for both unweighted and weighted MDS \cite{w}, and obtained better solutions than ACO-PP-LS  \cite{pot13} on standard benchmarks. Afterwards, another CC-based local search named FastMWDS was proposed, which significantly improved CC2FS on weighted massive graphs \cite{wang18}. Chalupa proposed an order-based randomized local search named RLSo \cite{chal}, and achieved better results than ACO-LS and ACO-PP-LS \cite{pot11,pot13} on standard benchmarks of unit disk graphs as well as some massive graphs. Fan et. al. designed a local search algorithm named ScBppw \cite{fan}, based on two ideas including score checking and probabilistic random walk. Recently an efficient local search algorithm for MDS is proposed in \cite{cai20}. The algorithm named FastDS is evaluated on some standard benchmarks. FastDS obtains the best performance for almost all benchmarks, and obtains better solutions than previous algorithms on massive graphs in their experiments.

A recent study for the $k$-distance dominating set problem can be found in \cite{min}. They proposed a heuristic algorithm that can handle real-world instances with up to $17$ million vertices and $33$ million edges. They stated that this is the first time such large graphs are solved for the minimum $k$-distance dominating set problem. They compared their proposed algorithm with the other best know algorithms for this problem.

\subsection{Our results}
 
In this paper we propose a local approximation algorithm which is an extension of \cite{ali}. 
We prove that the approximation factor of this algorithm in planar triangle free graphs is $16$ and $32$ for MTDS problem and MDS problem, respectively. 

We implement the centralized version of our algorithm and run it on real massive networks.
The proposed algorithm is fast and easy to implement both in distributed model and centralized model and the achieved results are satisfactory. Also the proposed algorithm can be used in dynamic networks, where the network is changing constantly i.e. during the time some nodes and edges are added or deleted. For example in a social network some users may be offline during a time period and so they are disconnected from the network and this affects the network.
We compare the results of centralized version of our algorithm with the results of \cite{cai20} to show the efficiency of the achieved results.
Also since the algorithm is fast, we compute dominating set for the other massive graphs.
For further experiments we modify the algorithm to solve the $k$-distance dominating set problem and compare the results with \cite{min}.

Note that our main goal is to show the efficiency of proposed algorithm in distributed model and the experimental results show that the algorithm can be used in practice for large networks.

\section{LOCAL algorithm for dominating set in graphs}
In this section we present our local algorithm for computing a (total) dominating set in graphs. We also compute the approximation factor of the algorithm, Algorithm \ref{alg1}, for triangle free planar graphs.

\subsection{Algorithm}

Let $G$ be a graph with vertex set $V=\{v_1,\dots, v_n\}$ and let $d_i$ be the degree of $v_i$. A local distributed algorithm that computes a total dominating set for $G$ is presented in Algorithm \ref{alg1}. We assume that $G$ does not have isolated vertices (vertices with degree $0$) because the concept of total dominating set can not be defined if the graph has a vertex with degree $0$.

\begin{algorithm}
\begin{algorithmic}[1]
\label{alg1}
\caption{Distributed Algorithm for computing a total dominating set in a graph with given integer $m\ge 0$.}
\label{alg1}
\STATE In the first round, each node $v_i$ chooses a random number $0<r_i<1$ and computes its weight $w_i=d_i+r_i$ and sends $w_i$ to its adjacent neighbors.
\STATE In the second round, each node $v$ marks a neighbor vertex $v_i$, whose weight $w_i$  is maximum among all the other neighbors of $v$.
\FOR {$m$ rounds} 
  \STATE Let $x_i$ be the number of times that a vertex is marked by its neighbor vertices, let $w_i=x_i+r_i$
  \STATE Unmark the marked vertices.
  \STATE Each vertex marks the vertex with maximum $w_i$ among its neighbor vertices.
    \ENDFOR
\STATE The marked vertices are considered as the vertices in our total dominating set for $G$.

\end{algorithmic}
\end{algorithm}

Now we show the correctness of Algorithm \ref{alg1}. In each step each vertex marks one of its adjacent vertices, this means the marked vertices form a total dominating set. 
Note that a total dominating set is also a dominating set so, this algorithm also serves as an algorithm for MDS problem.
 
\subsection{Approximation factor of Algorithm \ref{alg1} for $m=0$ in planar graphs without triangles}

Now we compute the approximation factor of Algorithm \ref{alg1} for planar triangle free graphs. 
\begin{theorem}
\label{main}
The approximation factor of Algorithm \ref{alg1} with $m=0$ for triangle free planar graphs for MDS problem and MTDS problem are respectively $32$ and $16$.
 \end{theorem}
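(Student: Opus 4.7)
My plan is to reduce the two claims to one. Since any total dominating set is a dominating set and $\gamma_t(G) \le 2\gamma(G)$ for graphs without isolated vertices, a $16$-approximation for MTDS immediately yields a $32$-approximation for MDS. So I would focus on proving $|S| \le 16\,\gamma_t(G)$, where $S$ denotes the marked set returned by Algorithm~\ref{alg1} with $m=0$.

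Fix a minimum total dominating set $D^*$. The plan is to build a charging map $g : S \to D^*$ and show $|g^{-1}(d)| \le 16$ for every $d \in D^*$; summing then gives $|S| \le 16|D^*|$. To define $g$, for each $v \in S$ first pick a neighbor $u$ of $v$ that marked $v$; then, since $D^*$ totally dominates $u$, there exists some $d \in N(u) \cap D^*$, and I set $g(v) := d$. Three structural facts follow at once: (i) because $v$ maximizes the weight $\deg(\cdot)+r_\cdot$ over $N(u)$ and $d \in N(u)$, we have $\deg(v) \ge \deg(d)$; (ii) $v$ is within distance two of $d$ through the common neighbor $u$; (iii) triangle-freeness forces $N(d)$ to be an independent set, so $v \notin N(d)\setminus\{d\}$ (else $d,u,v$ would form a triangle). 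Hence every $v \in g^{-1}(d)$ is either $d$ itself or a distance-$2$ vertex with $\deg(v) \ge \deg(d)$.

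It remains to bound $|g^{-1}(d)| \le 16$. A first, easy bound: each $u \in N(d)$ marks exactly one vertex, so $|g^{-1}(d)| \le \deg(d)$, which settles the case $\deg(d) \le 16$. For $\deg(d) \ge 17$, let $A := N(d)$, $k := |A|$, and $B := g^{-1}(d)\setminus\{d\}$; then $B$ consists of distance-$2$ vertices $v$ with $\deg_G(v) \ge k$ each having at least one neighbor in $A$. Working inside the planar, triangle-free subgraph induced on $\{d\} \cup A \cup B$, the bipartite part between $A$ and $B$ obeys the planar bipartite bound $|E(A,B)| \le 2(|A|+|B|)-4 = 2k+2|B|-4$. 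Combining this edge bound with the degree lower bound $\deg_G(v) \ge k$ for every $v \in B$, I would derive $|B| \le 15$ and hence $|g^{-1}(d)| \le 16$.

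The principal obstacle is this last combinatorial step: extracting a constant cap on $|B|$ from planarity and triangle-freeness. The cleanest route is likely a discharging argument on a planar embedding of $\{d\} \cup A \cup B$, giving each face a charge via Euler's formula and redistributing it to heavy vertices of $B$; alternatively one can double count using the exclusion of $K_{3,3}$ as a subgraph of a planar graph to cap how many vertices of $B$ may share neighbors in $A$. Once the bound $|g^{-1}(d)| \le 16$ is in place, $|S| \le 16\,\gamma_t(G) \le 32\,\gamma(G)$, proving both approximation factors claimed in the theorem.
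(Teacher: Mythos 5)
Your opening reduction is fine and matches the paper's reasoning: the algorithm outputs a total dominating set, so a bound of $16\,\gamma_t(G)$ on the marked set immediately gives $32\,\gamma(G)$ for MDS. The gap is in the core of your argument. The paper does not prove a per-vertex charging cap; it proves a \emph{global} bound. It builds an auxiliary planar multigraph $G'$ on the optimal set $V_{opt}$, drawing for each marked vertex $v'$ one (possibly ``broken'') edge $u^*-u-v'-v^*$ between the $V_{opt}$-dominators of $v'$ and of a vertex $u$ that marked it, argues the drawing has no crossings and that each drawn edge is repeated at most twice, and then bounds the total number of edges of $G'$ by $6\,opt$ (via Euler's formula, allowing for doubled edges) plus $2\,opt$ (inner parallel edges, each associated to a vertex of $V_{opt}$ trapped between two parallel edges, with each such vertex used at most twice). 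This yields $k/2\le 8\,opt$ in aggregate, with no claim that any single optimal vertex receives boundedly many charges.

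Your local cap $|g^{-1}(d)|\le 16$ is not just unproven but false. First, the derivation cannot close: each $v\in B$ is only guaranteed \emph{one} edge into $A=N(d)$, so $|E(A,B)|\ge|B|$ combined with $|E(A,B)|\le 2(|A|+|B|)-4$ gives $0\le 2k+|B|-4$, which is vacuous; the hypothesis $\deg_G(v)\ge k$ counts edges of $v$ going anywhere in $G$, not into $A$, so planarity of the subgraph on $\{d\}\cup A\cup B$ never sees them. Second, there are planar triangle-free instances where every admissible charging map of your form overloads one vertex: take a hub $d$ with neighbors $u_1,\dots,u_k$, each $u_i$ of degree $2$ with second neighbor $v_i$ of degree $k+1$ (adjacent to $u_i$ and to $w_{i,1},\dots,w_{i,k}$), and a second hub $h_i$ of degree $k+2$ adjacent to $w_{i,1},\dots,w_{i,k+2}$. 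Any minimum total dominating set contains $d$, the $h_i$'s, one $u_i$, and one $w_{i,j}$ per gadget, and no $v_i$. The algorithm makes each $u_i$ mark $v_i$ (its unique higher-weight neighbor), while every $w_{i,j}$ prefers $h_i$ over $v_i$; hence the only marker of $v_i$ is $u_i$, whose only $D^*$-neighbor is $d$, forcing $g(v_i)=d$ for all $i$ and $|g^{-1}(d)|=k$, unbounded. The approximation ratio is still fine on this instance, but only a global accounting over all of $V_{opt}$ (as in the paper's Euler-formula argument on $G'$) can see that; no fiber-by-fiber bound of the kind you propose can work.
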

\begin{proof}
Let $V'=\{v'_1, v'_2, \dots, v'_k\}$ be the set of vertices that are marked in the algorithm. Let $V_{opt}=\{v^*_1,\dots, v^*_{opt}\}$ be an optimal solution for MTDS. We construct a planar multi-graph $G'$ (i.e. might have multiple edges between two of its vertices) with the vertex set of $V_{opt}$ and $k$ edges that are traced either along an edge of $G$ or along a path of length $3$ of $G$ (called a broken edge). We show that each edge of this graph is counted at most twice, so we have a planar graph with at least $k/2 $ edges.  Since this graph is planar, we can divide the edges between two vertices (if there are many) into two groups, the top and bottom edges (called outer edges), and the middle edges (called inner edges). We can at most have $6opt$ outer edges (planar simple graph with $m$ vertices has at most $3m$ edges. Our graph might have double edges, hence the factor of $6$). To any inner edge we associate a vertex $z^*$ in $V_{opt}$ and show that each element of $V_{opt}$ can be associated to at most $2$ inner edges and hence the number of inner edges is at most $2opt$. So $k/2$ is at most $6opt+2opt=8opt$ and $k< 16opt$. Now we explain the details.

To any vertex $v$ associate in an arbitrary way a vertex $v^* \in V_{opt}$ that dominates it. For each $v'\in V'$, choose a vertex $u$ that marks it. Now let $v^*$ and $u^*$ be the dominating vertices in $V_{opt}$ for $v'$ and $u$. Since $G$ is triangle free, so $v^*\neq u^*$ . If $u^*$ and $v^*$ are different from $v'$ and $u$
then we draw a broken edge traced on the path of length three $u^*-u-v'-v^*$ otherwise they are connected by an edge $e$ in $G$ that we trace $e$ (tracing means we draw an edge almost identical, so we avoid unnecessary collision of edges).
For each element of $V'$ we have an edge between two vertices of $V_{opt}$ so we have $k$ edges. The broken edge $u^*-u-v'-v^*$ will be counted twice if $u$ is also in $V'$ and $v'$ is the vertex that marks it. There is no crossing, since a crossing can occur either at the vertex $u$ or $v'$, when $u$ is in $V'$ or $v'$ marks a vertex $v''$ in $V'$. In both cases since we have chosen
a fixed dominating vertex in $V_{opt}$, we can draw the broken edge in a non-crossing manner. If $v^*-x-y-w^*$ is an inner edge and $x$ has marked $y$, since 
$d_{v^*}$ is at least $3$ (there are two outer edges in this case) so $d_y$ is at least $3$ as well, and hence there is a vertex $z$ that is connected to $y$ and hence there is a vertex $z^*$ in $V_{opt}$ that dominates it.
   We associate $z^*$ to this inner edge. Since $z^*$ is bounded between two (broken) edges from $v^*$ to $w^*$ it can be associated to at most one more inner edge, therefore the number of inner edges is at most $2opt$.

 \end{proof}
 
 Note that $r_i$'s are not used in this proof, however in experiments they give a total order on the vertices of the graph (according to $d_{v_i}+r_i$) and hence the number of marked vertices will be reduced in principle. For example 
consider a complete graph (although non-planar) without $r_i$'s then it is possible that each vertex chooses a different neighbor and instead of just one vertex, all the vertices get marked. But with $r_i$'s all the vertices will choose one vertex.
 
Now we show how to extend this algorithm to solve the $k$-distance dominating set problem. In solving the $k$-distance dominating set problem, for each node $v$, its neighbors are the vertices that their distance from $v$ is at most $k$ and the algorithm is run as before.

\section{Experimental results}
In this section we present our experimental results. 
In theory we have seen that the algorithm has good approximation factor in certain networks, for example planar graphs without 4-cycles or planar graphs without triangles. 
These approximation factors are achieved in the worst case, however in practice this is not the case that happens most of the time. 
So based on this fact we can expect good results for real data too. 

We choose $5$ benchmarks which were used in \cite{cai20} to compare the performance of our algorithms with their algorithm which is denoted as FastDS. In \cite{cai20} they compared  FastDS with four heuristic algorithms, including CC2FS \cite{w}, FastMWDS \cite{wang18}, RLS \cite{chal} and ScBppw \cite{fan}. It is stated that CC2FS is good at solving standard benchmarks, while FastMWDS and ScBppw are designed to solve massive graphs, and RLSo is a recent algorithm that outperforms previous ant optimization and hyper meta-heuristic algorithms.

In \cite{cai20} for each instance, all algorithms were executed $10$ times with random seeds $1,2,3\dots10$. The time limit of each run was $1000$ seconds. For each instance, they reported the best size (Dmin) and the average size (Davg) of the solutions found over the $10$ runs. 
Compared to the other algorithms the FastDS algorithm outperformed in the quality of solutions in most cases \cite{cai20}.

In the following we present a brief description of the benchmarks from \cite{cai20}.
 
T1\footnote{http://mail.ipb.ac.rs/~rakaj/home/BenchmarkMWDSP.htm}: This data set consists of $520$ instances where each instance has two different weight functions. As in \cite{cai20} we select these original graphs where the weight of each vertex is set to $1$. There are $52$ families, each of which contains $10$ instances with the same size.

BHOSLIB\footnote{http://networkrepository.com/bhoslib.php}: This benchmark are generated based on the RB model near the phase transition. It is known as a popular benchmark for graph theoretic problems.

SNAP\footnote{http://snap.stanford.edu/data}: This benchmark is from Stanford Large Network Dataset Collection. It is a collection of real world graphs from $10^4$ vertices to $10^7$ vertices.

DIMACS10\footnote{http://networkrepository.com/dimacs10.php}: This benchmark is from the 10th DIMACS implementation challenge, which aims to provide large challenging instances for graph theoretic problems.

Network Repository\footnote{http://networkrepository.com/}: The Network Data Repository includes massive graphs from various areas. Many of the graphs have $100$ thousands or millions of vertices. This benchmark has been widely used for graph theoretic problems including vertex cover, clique, coloring, and dominating set problems.

As in \cite{cai20} for SNAP benchmarks we consider the graphs with at list $30000$ vertices and for Repository benchmark we choose the graphs with at least $10^5$ vertices.

In \cite{cai20} the algorithms were implemented in C++ and complied by g++. All experiments were run on a server with Intel Xeon E5-2640 v4 2.40GHz with 128GB RAM under CentOS 7.5. In our experiment the algorithm is implemented in Java and is run on a server with Intel Xeon E5-2650 v3s 2.29GHz with 70GB RAM under Ubuntu 18.04.4.

In Table \ref{time} the average running time of Algorithm \ref{alg1},  FastDS, CCFS and FastMWDS for BHOSLIB and T1 benchmarks are stated. Note that we can not compare the running times because the configuration of our system and the programming language in our experiments are not the same as \cite{cai20} 
but we remark that their's is more efficient than ours.

 As it can be seen for these benchmarks our algorithm is really fast. In all our experiments we mention the running time for each instance in our sequential implementations.

\begin{table}[h]
\begin{footnotesize}
\caption{Average running time of algorithms.}
\begin{center}
\begin{tabular}{| lllll |}
 \hline
 
\textbf{Benchmark}& \textbf{CCFS}&\textbf{FastMWDS}& \textbf{  FastDs} &  \textbf{  Algorithm \ref{alg1} $m=0,2,5$}\\ \hline
T1  &   4.88s& 8.35s&             11.23s &      0.0021, 0.0048, 0.0074\\ \hline
BHOSLIB  &96.68s&  101.44s&    95.62s   &    0.028, 0.076, 0.111\\ \hline

\end{tabular}
\end{center}
\label{time}
\end{footnotesize}
\end{table}

In the following, we presented our experimental results. Note that our algorithm computes a total dominating set and since a total dominating set is also a dominating set, we report the marked vertices as a dominating set. In theory, the MTDS is at most $2$ times of MDS, so this explains why our answers are greater than the answers of FastDS algorithm. 
Also, note that we can have a trade-off between the running time and the quality of solution by changing the value of $m$. However our experiments show that the quality of solution does not have significant improvement for large values of $m$. By experiment we have chosen the value of $m$ to be $0$, $2$ and $5$.

Table \ref{t1} contains the results for T1 benchmark. In Table \ref{bho} we present the results for BHOSLIB benchmark. In this benchmark the achieved results by Algorithm \ref{alg1} are surprisingly better than FastDs. This happens because of the nature of our algorithm. In dense graphs or the graphs with large maximum degree, close to $n$ (number of vertices), the adjacent vertices to the maximum degree choose it, so the number of marked vertices is small and close to the exact solution.
For example in frb100-40 instance of BHSLIB benchmark, the number of vertices is about 4000, the maximum degree is 3864, the minimum degree is 3553. Our algorithm choose $3$ vertices, v1096, v1555 and v1088 with degrees $3864$, $3861$ and $3861$ respectively.

Table \ref{snap} contains the result for snap and DIMACS10 benchmarks. In Table \ref{nr} we present the result for Network repository benchmark. In this benchmark we also run the algorithm on more instances related to web and social networks other than \cite{cai20}. The empty cells are the instances that were not computed in \cite{cai20}.

\begin{table}[h]
\begin{footnotesize}
\caption{Experimental results for T1 benchmark for $m=0,2,5$.}
\begin{center}
\begin{tabular}{| llll |}
 \hline
     
\textbf{instance}&                  \textbf{time(s) $m=0,2,5$ }&              \textbf{Sol Alg \ref{alg1}} &                                      \textbf{ FastDS \cite{cai20}}\\ \hline
V100E100 &                  0.001, 0.002, 0.004&            66, 65, 63&                 33.6\\ \hline

V100E1000  &              0.000, 0.000, 0.001&            14, 12, 12  &              7.5\\ \hline

V100E2000&                 0.000, 0.000, 0.001&            6, 6, 6  &                  4.1\\ \hline

V100E250 &               0.000, 0.001, 0.003  &          39, 33, 31&                 19.9\\ \hline

V100E500 &                0.001, 0.002, 0.005&            22, 19, 20&                 12.2\\ \hline

V100E750 &                0.001, 0.003, 0.005&            14, 14, 12 &                9\\ \hline

V150E1000 &                0.001, 0.001, 0.001&            25, 22, 22 &                15\\ \hline

V150E150&                   0.001, 0.003, 0.003&            100, 97, 96&                50\\ \hline

V150E2000  &                0.000, 0.001, 0.004&            16, 15, 14 &                9\\ \hline

V150E250&                   0.001, 0.002, 0.012&            66, 63, 61 &                39.1\\ \hline

V150E3000 &                 0.001, 0.001, 0.005&            14, 13, 11&                 6.9\\ \hline

V150E500&                  0.000, 0.005, 0.006&            39, 37, 36  &               24.6\\ \hline

V150E750 &                  0.000, 0.001, 0.002&            34, 27, 26&                 18.3\\ \hline

V200E1000&                  0.000, 0.001, 0.003&            44, 36, 35&                 24.4\\ \hline

V200E2000 &                 0.000, 0.001, 0.002&            26, 22, 22 &                15\\ \hline

V200E250 &                  0.001, 0.002, 0.003 &           102, 99, 97&                61.1\\ \hline

V200E3000 &                0.001, 0.001, 0.002&            21,16,15 &                11\\ \hline

V200E500 &                  0.000, 0.001, 0.005&            65,61,60 &                36.6\\ \hline

V200E750 &                  0.001, 0.001, 0.002&            54,47,45 &                30\\ \hline

V250E1000&                 0.001, 0.001, 0.003&            61,54,52 &                36\\ \hline

V250E2000&                 0.000, 0.001, 0.002 &           38, 31, 31 &                21.6\\ \hline

V250E250  &                 0.002, 0.002, 0.004 &           164, 160, 160 &             83.3\\ \hline

V250E3000&                  0.001, 0.003, 0.005&            33, 28, 27  &               16\\ \hline

V250E500  &                 0.000, 0.001, 0.002&            96, 93, 91&                 57.8\\ \hline

V250E5000 &                 0.002, 0.003, 0.008 &           20, 19, 18&                 11\\ \hline

V250E750&                   0.001, 0.002, 0.003 &           79, 68, 67&                 44\\ \hline

V300E1000&                  0.000, 0.002, 0.003&            89,  79, 77 &                48.6\\ \hline

V300E2000 &                 0.001, 0.001, 0.004&            59, 54, 48 &                29.4\\ \hline

V300E300 &                  0.001, 0.004, 0.005 &           193, 192, 192&              100\\ \hline

V300E3000 &                 0.001, 0.003, 0.007&            42, 35, 33&                 22\\ \hline

V300E500&                   0.001, 0.002, 0.003 &           138, 125, 122 &             77.7\\ \hline

V300E5000  &               0.001, 0.005, 0.010 &           31, 27, 23 &                15.1\\ \hline

V300E750 &                  0.001, 0.004, 0.005&            114, 96, 93&                59.6\\ \hline

V500E1000&                  0.002 ,0.004, 0.006&            199, 187, 181&              114.7\\ \hline

V500E10000 &                0.005, 0.008, 0.012&            41,  40, 37 &                22.2\\ \hline

V500E2000  &                0.002, 0.008, 0.012  &          134, 116, 114 &             71.2\\ \hline

V500E500 &                  0.003, 0.006, 0.013 &           329, 326, 326&              167\\ \hline

V500E5000 &                0.004, 0.011, 0.021&            67, 62, 60  &               36.9\\ \hline

V800E100 &                 0.005, 0.014, 0.015 &           537, 528, 523&              267\\ \hline

V800E1000 &               0.004, 0.011, 0.013 &           434, 421, 414  &            242.5\\ \hline

V800E10000&               0.003, 0.009, 0.013&            91, 83, 78&                 50.2\\ \hline

V800E2000  &               0.014, 0.015, 0.013 &           282, 266, 252&              158.3\\ \hline

V800E5000 &               0.004, 0.012, 0.015 &           145, 138, 130 &             82.6\\ \hline

V1000E1000 &             0.010, 0.012, 0.014 &           677, 654, 654 &             333.7\\ \hline

V1000E10000&            0.005, 0.012, 0.022 &           145, 129, 118 &             74\\ \hline

V1000E15000 &           0.006, 0.013, 0.020 &           100, 90, 87 &               55\\ \hline

V1000E20000 &           0.007, 0.018, 0.020 &           86, 75, 65 &                45\\ \hline

V1000E5000 &            0.006, 0.012, 0.014 &           226, 201, 197&              121.1\\ \hline

 \end{tabular}
\end{center}
\label{t1}
\end{footnotesize}
\end{table}

\begin{table*}[h]
\begin{footnotesize}
\caption{Experimental results for BHOSLIB benchmark for $m=0,2,5$.}

\begin{tabular}{|llll | llll     |}
 \hline
 
\textbf{instance}&                  \textbf{time(s) $m=0,2,5$ }&              \textbf{Sol Alg \ref{alg1}} &                                      \textbf{ FastDS \cite{cai20}}& \textbf{instance}&                  \textbf{time(s) $m=0,2,5$ }&              \textbf{Sol Alg \ref{alg1}} &                                      \textbf{ FastDS \cite{cai20}}\\ \hline

frb40-19-1&           0.024, 0.032, 0.115&       3, 3, 3&               14& frb50-23-4 &          0.023,  0.067, 0.089&       7, 4, 4 &              18\\ \hline

frb40-19-2 &          0.023, 0.032, 0.061&       6, 3, 3&               14&frb50-23-5 &          0.022, 0.075, 0.084 &      5, 3, 3 &              18\\ \hline

frb40-19-3&          0.003, 0.034, 0.059&       4, 4, 4 &              14&frb53-24-1&           0.032, 0.077, 0.094 &      3, 3, 3&               19\\ \hline

frb40-19-4&             0.023, 0.033, 0.084&       3, 4, 4 &              14&frb53-24-2 &          0.033, 0.096, 0.115&       4, 3, 3 &              19\\ \hline

frb40-19-5 &            0.019, 0.035, 0.074&       3, 3, 3&               14&frb53-24-3 &          0.032, 0.077, 0.094&       3, 3, 3&               19\\ \hline

frb45-21-1 &        0.022, 0.092, 0.098&       5, 5, 4&               16&frb53-24-4 &          0.026, 0.086, 0.091&       7, 6, 6&               18\\ \hline

frb45-21-2 &          0.024, 0.045, 0.077&       7, 6, 5&               16&frb53-24-5&             0.029, 0.081, 0.098&       5, 3, 3&               19\\ \hline

frb45-21-3 &         0.022, 0.066, 0.078&       3, 3, 3&               16&frb59-26-1&       0.047, 0.109, 0.118&       3, 3, 3 &              20\\ \hline

frb45-21-4 &           0.022, 0.047, 0.076&       4, 4, 4&               16&frb59-26-2&          0.029, 0.087, 0.107&       3, 3, 3&               21\\ \hline

frb45-21-5 &        0.032, 0.042, 0.075&       3, 3, 3 &              16&frb59-26-3 &           0.046, 0.114, 0.119 &      6, 3, 3 &              21\\ \hline

frb50-23-1 &           0.024, 0.064, 0.098&       4, 4, 4 &              18&frb59-26-4 &           0.031, 0.080, 0.155&       3, 3, 3&               21\\ \hline

frb50-23-2 &         0.023, 0.085, 0.087&       4, 4, 4 &              18&frb59-26-5 &           0.025, 0.099, 0.102&       6, 3, 3&               21\\ \hline

frb50-23-3 &           0.007, 0.063, 0.084&       5, 3, 3 &              18&frb100-40 &            0.083,  0.255, 0.550&       3, 3, 3&               36\\ \hline

\end{tabular}

\label{bho}
\end{footnotesize}
\end{table*}

\begin{table*}[h]
\begin{footnotesize}
\caption{Experimental results for Network snap and DIMACS10 benchmark for $m=0,2,5$.}
\begin{center}
\begin{tabular}{| llll |}
 \hline
\textbf{instance}&                  \textbf{time(s) $m=0,2,5$ }&              \textbf{Sol Alg \ref{alg1}} &                                      \textbf{ FastDS \cite{cai20}}\\ \hline

Amazon0302(V262K E1.2M)&     				                4.322, 11.832, 21.681 &        82690, 71023, 68914&               35593\\ \hline
Amazon0312(V400K E3.2M) &          			                7.833,19.076, 32.826&         100960, 81107, 75368&              45490\\ \hline
Amazon0505(V410K E3.3M) &           				         6.962,16.653, 29.229 &        91009, 78355, 72273&               47310\\ \hline
Amazon0601(V403K E3.3M)&            				            6.844,15.708, 26.702 &        85559, 70372, 66267&               42289\\ \hline
email-EuAll(V265K E420K)&      				 	            1.156, 2.208, 3.613&           32007, 31935, 31925&               18181\\ \hline
p2p-Gnutella24(V26K E65K) &       				               0.039, 0.117, 0.208&           6056, 5724, 5569&                  5418\\ \hline
p2p-Gnutella25(V22K E54K)&           			                 0.033, 0.084, 0.150&           5066, 4771, 4637&                  4519\\ \hline
p2p-Gnutella30(V36K E88K)&         				                0.065, 0.201, 0.308&           8067, 7559, 7379&                  7169\\ \hline
p2p-Gnutella31(V62K E147K)&      				               0.174, 0.496, 1.057 &          13917, 13123, 13086&               12582\\ \hline
soc-sign-Slashdot081106(V77K E516K)&			                0.198, 0.487, 0.845&           16197, 15184, 15069&               14312\\ \hline
soc-sign-Slashdot090216(V81K E545K)&			               0.198, 0.487, 0.845&           17291, 16290, 15871&               15305\\ \hline
soc-sign-Slashdot090221(V82K E549K)&			                0.381, 0.817, 1.383&           16924, 16331, 16117&               -\\ \hline
soc-Epinions1(V75K E508K) &          			              2.050, 2.589, 3.298 &          17146, 16646, 16132&               15734\\ \hline
web-BerkStan(V685K E7.6M)&      				               11.522, 28.080, 52.323&        68449, 60265, 55912&               28432\\ \hline
web-Stanford(V281K E2.3M)&        				           1.191, 2.824, 5.330 &          36928, 32956, 30845&               13199\\ \hline
wiki-Talk(V2.3M E5M)  &              				        5.447, 18.860, 35.098&         40303, 39194, 39191&               36960\\ \hline
web-NotreDame(V325K E1.5M)&         			           2.301, 6.293, 11.914&          34788, 31929, 30883&               23735\\ \hline
wiki-Vote(V7K E103K) &               			                  2.155, 2.419, 2.663&           1381, 1189, 1183 &                 1116\\ \hline
cit-HepPh(V34K E421K)&            				              16.367, 18.854, 22.278&        7476, 4866, 4305&                  3078\\ \hline
cit-HepTh(V27K E352K)&            			                       11.972, 16.458, 21.249&        6159, 4446, 4155 &                 2936\\ \hline
rgg-n-2-17-s0 &               1.101, 2.649, 4.817 &	         33789, 24020, 23595 &        43412\\ \hline
rgg-n-2-19-s0 &             14.948, 37.951, 69.568 &         127417, 88424, 86640&       44423\\ \hline
rgg-n-2-20-s0 &               57.436,143.619, 265.936&       248861, 170092, 166453&     84708\\ \hline
rgg-n-2-21-s0	&           222.205, 554.531, 1021.933&     487626, 328695, 320941&     162266\\ \hline

rgg-n-2-22-s0&	           862.451, 2127.245, 3918.068&     952637, 635144, 619729&    312350\\ \hline

rgg-n-2-23-s0&	          3387.642, 8272.262, 15166.054&   1866988, 1230026, 1199105&    605278   \\ \hline

citationCiteseer &        1.818, 5.231, 9.883&            58985, 52797, 52512&       43412\\ \hline

coAuthorsCiteseer &          1.322,  3.682, 7.144&         40878, 38444, 38331&       22011\\ \hline

co-papers-citeseer&          2.253 , 6.214 , 11.798 &      43879 , 37352 , 37043&    26082\\ \hline

kron-g500-logn16 &           0.192 , 0.563  , 0.985&       14300 , 14176 , 14173&   14100\\ \hline

co-papers-dblp  &              3.601 , 9.422 , 9.422&       62670 , 52802 , 52196&   43978\\ \hline

\end{tabular}
\end{center}
\label{snap}
\end{footnotesize}
\end{table*}

\begin{table*}[h]
\begin{footnotesize}
\caption{Experimental results for Network repository benchmark for $m=0,2,5$.}
\begin{center}
\begin{tabular}{| llll |}
 \hline
 
\textbf{instance}&                  \textbf{time(s) $m=0,2,5$ }&              \textbf{Sol Alg \ref{alg1}} &                                      \textbf{ FastDS \cite{cai20}}\\ \hline
soc-youtube(V496k E2M)&                            4.103, 12.808, 24.979&         107611, 102464, 102366 &               89732\\ \hline
soc-flickr(V514K E3M)&                               7.190, 22.538, 48.541  &       111049, 106455, 106326 &               98062\\ \hline
ca-coauthors-dblp(V540K E15M)&                   3.679, 9.324, 17.806&          62656, 52826, 52237&                   35597\\ \hline
ca-dblp-2012(V317K E1M)   &                        2.649, 7.648, 14.942&          55071, 51946, 51841 &                  46138\\ \hline
ca-hollywood-2009(V1.1 E56.3)&                    11.049, 26.983, 46.099  &      75430, 62699, 61658&                   48740\\ \hline
inf-roadNet-CA(V2M E3M)&                            405.566, 1219.695, 2436.119 &  900458, 856218, 841233 &               586513\\ \hline
inf-roadNet-PA(V1M E2M)&                              125.252, 377.544, 750.686&     528121, 501972, 498389 &               326934\\ \hline
rt-retweet-crawl(V1M E2M)&                           12.787, 37.394, 72.781&        84722, 82386, 81350&                   75740\\ \hline
sc-ldoor(V952K E21M)&                                 16.651, 46.390, 89.056&        90654, 75095, 72627 &                  62411\\ \hline
sc-msdoor(V416K E9M) &                              2.770, 7.250, 13.904 &         28928, 25222, 23075&                   19678\\ \hline
sc-pwtk(V218K E6M)  &                                 0.714, 2.236, 4.396&           9088, 8552, 8482 &                    4200\\ \hline
sc-shipsec1(V140K E2M) &                           0.748, 1.871, 3.359&           17577, 13273, 11863&                   7662\\ \hline
sc-shipsec5(V179K E2M)&                            1.286, 3.421, 6.162&           25243, 22899, 20236 &                  10300\\ \hline
soc-FourSquare(V639K E3M)&                     18.054, 28.809, 44,000&        63643, 62245, 62050  &                 60979\\ \hline
soc-buzznet(V101K E3M) &                            0.216, 0.543, 0.932 &          201, 145, 138 &                        127\\ \hline
soc-delicious(V536K E1M)&                          3.708, 11.756, 22.703 &        58614, 57596, 55840&                   55722\\ \hline
soc-digg(V771K E6M)&                                  7.195, 19.044, 35.899 &        76454, 71018, 68846&                   66155\\ \hline
soc-flixster(V3M E8M)&                                   25.354, 73.153, 142.177&       92197, 91516, 91010&                   91019\\ \hline
soc-lastfm(V1M E5M) &                                8.326, 24.161, 45.156&         68060, 67696, 67138  &                 67226\\ \hline
soc-livejournal(V4M E28M)&                         459.524, 1230.058, 2317.355&   903243, 861908, 853903&                793887\\ \hline
soc-orkut(V3M E106M)    &                             73.094, 174.005, 307.326&      231177, 160436, 141907 &               110547\\ \hline
soc-orkut-dir(V3M E117M) &                           68.837, 163.151, 292.163&      230591, 213617, 201382&                93630\\ \hline
soc-pokec(V2M E22M)&                                 46.195, 117.618, 215.056&      301500, 241907, 239900&                207308\\ \hline
soc-youtube-snap(V1M E3M)&                       27.546, 80.038, 153.958 &      23587, 228008, 224933  &               213122\\ \hline
socfb-A-anon(V3M E24M)  &       	           79.337, 185.572, 333.808&      231903, 211315, 203021&               201690\\ \hline
socfb-B-anon(V3M E21M) &        	            70.376, 165.034, 301.235&      215288, 190114, 188315&                187030\\ \hline
socfb-FSU53(V28K E1M)&          	           0.086, 0.213, 0.333  &         3683, 2545, 2309&                      -\\ \hline
socfb-Indiana69(V30K E1M) &                 0.099, 0.253, 0.422 &          3434, 2372, 2249&                    -\\ \hline
socfb-MSU24(V32K E1M)&          	          0.099, 0.244, 0.404 &          4180,  3012,  2962 &                     -\\ \hline
socfb-Michigan23(V30K E1M)&   	              0.095, 0.237, 0.389&           4109, 2953, 2712&                    -\\ \hline
socfb-Penn94(V42K E1M) &        	            0.138, 0.344, 0.589 &          5588, 4137, 4044  &                    -\\ \hline
socfb-Texas80(V32K E1M) &       	           0.100, 0.251, 0.418&           4357, 3104, 20725 &                    -\\ \hline
socfb-Texas84(V36K E2M) &                        0.126, 0.321, 0.544&           4211, 3148, 3084  &                    -\\ \hline
rec-epinions(V755K E13M)&       	            1.114, 4.026, 8.270 &          10116, 10076, 9766 &                   9598\\ \hline
soc-dogster(V427K E9M) &      		              1.192, 2.894, 5.287 &          31000, 28300, 27261 &                  26253\\ \hline
sc-rel9(V6M E24M) &             		           257.363, 731.034, 1375.043&    230045, 201699, 197014&                127548\\ \hline
rec-libimset-dir(V221K E17M)& 	                    1.439, 4.262, 8.346 &          15229, 13910, 13225&                   12955\\ \hline

web-EPA&                     0.01, 0.012, 0.016&       361, 306, 303&   -       \\ \hline      

web-edu &                   0.015, 0.045, 0.092 &     254, 254, 253&-\\ \hline

web-polblogs&                0.003, 0.006, 0.012&      128, 117, 117&-\\ \hline

web-spam &                   0.013, 0.018, 0.026&      998, 918, 917&-\\ \hline

web-frwikinews-user-edits&                0.032, 0.074, 0.092&      696, 686, 686&-\\ \hline

web-indochina-2004 &            0.016, 0.043, 0.088 &    1519, 1516, 1516&-\\ \hline

web-webbase-2001 &              0.016, 0.043, 0.088&     1519, 1516, 1516&   -   \\ \hline          

web-sk-2005 &              0.983, 2.662, 5.136&	  33767, 32345, 32305 &  -   \\ \hline       

web-uk-2005&	              0.149, 0.496, 0.864&	   1719, 1719, 1719&                1421\\ \hline

web-arabic-2005&            0.639, 2.339, 3.781&	   21261, 20307, 20279&-\\ \hline

web-Stanford&	             0.392, 1.456, 3.046 &    21335, 19961, 19942&-\\ \hline

web-NotreDame &            4.039, 8.443, 14.713&	  1197991, 1196608, 1196579&-\\ \hline

web-BerkStan-dir&          10.358, 24.594, 42.32 &   6959022, 6955884, 6955817&-\\ \hline

web-it-2004&                 4.193, 12.689, 25.002&   34770, 34497, 34442  &            32997\\ \hline

web-italycnr-2000&         1.832, 4.96, 9.72 &     26137, 24554, 24530&-\\ \hline

web-wikipedia2009&         103.305, 297.81, 590.188&    421852, 400294, 399634  &         346581\\ \hline

web-google-dir&             12.913, 32.988, 59.198&  4336976, 4325727, 4325505&-\\ \hline

web-baidu-baike &         45.817, 127.943, 249.043&   326856, 310844, 310441&       277847\\ \hline

\end{tabular}
\end{center}
\label{nr}
\end{footnotesize}
\end{table*}
These experiments show that the fast running time and easy implementation of our proposed algorithm enable us to compute reasonable solutions for really large networks.
 This algorithm can be used in very large networks that previous algorithms were not able to compute the solution because of time or hardware limitations.

The quality of solutions in our algorithm is worse than \cite{cai20} in some benchmarks. however our results are at most 2 times of the results of FastDS which is reasonable since as we said before our algorithm finds a total dominating set which potentially can be 2 times larger than the dominating set that the other algorithms find and also the algorithm is distributed and the previous ones do not work in distributed setting. Obviously not all the sequential algorithms can be made distributed with the same solution quality. This happens in distributed algorithms for MDS problem as well. We wanted to show that a simple and fast distributed algorithm can have reasonable solutions.
However because of the fast running time of our algorithm, it can be use in sequential model in the case where we have time limit, since even for small values of $m$, the algorithm gives reasonable results.

For the further experiments we solve the $k$-distance dominating set problem on some of the large social networks instances. The instances are from the network repository benchmark.
We compare our results with the results of \cite{min}. Their experiments are conducted on a computer with Intel Core i7-8750h 2.2 GHz running Ubuntu OS. The programming language is Python using igraph package to perform graph computations. In \cite{min} their algorithm known as HEU4 was compared with an algorithm called HEU3 from \cite{cam} and another greedy algorithm called HEU1 which was used by their partner.
In Table \ref{k} the results are shown. As it can be seen our algorithm has reasonable results in distributed model compared to their results in sequential model.
Note that again we cannot compare the running times since the configuration of systems are different.

If we construct a graph $G'$ from $G$ such that there is an edge between two vertices $u$ and $v$ if their distance is at most $k$, then solving the minimum $k$-distance dominating set for $G$ is equal to solving MDS for $G'$. This means that for larger values of $k$ the graph $G'$ will be denser than $G$. So we expect that our algorithm outputs near optimal solutions for larger values of $k$ in dense graphs. In the $k$-distance dominating set problem theoretically the algorithm needs $O(km)$ rounds because when a node wants to mark its neighbor with maximum $w_i$, it needs $k$ rounds to receive $w_i$'s of neighbors within distance $k$ from itself.
\begin{table*}[h]
\begin{footnotesize}
\caption{Experimental results for Network repository benchmark for $m=0,2,5$ and $k=2$.}
\begin{center}
\begin{tabular}{| lllllll |}
 \hline

  \textbf{ instance}&          \textbf{Alg \ref{alg1} sol m=0,2,5}&        \textbf{Alg \ref{alg1} time(s) m=0,2,5}&                                      \textbf{ HEU1 sol}&           \textbf{HEU1 time(s)} &                        \textbf{   HEU4 sol}&          \textbf{HEU4 time}\\ \hline
soc-delicious&        32029, 18666, 17575&         6.763, 18.937, 36.800&                     34516 &                3.57  &                 8155&                                 2064.00\\ \hline
soc-flixster  &        46911, 20981, 17431&     152.383, 413.453, 776.118&                 48789 &                85.93&                              9860&                  23694.58\\ \hline
soc-livejournal  &     395992, 291846, 277713 &     269.600, 656.184, 1182.615&             447552 &               1582.87&                            189121&                16728.22\\ \hline

\end{tabular}
\end{center}
\label{k}
\end{footnotesize}
\end{table*}

Note that we have implemented the sequential version of Algorithm \ref{alg1} without any change on the distributed version.
In the distributed model all nodes are executing at the same time, here at the same time in each round they choose their neighbor with maximum $w_i$, in our sequential implementation the same nodes are chosen but one by one as well.
So the results of distributed implementation and sequential implementation are the same.

We can modify this centralized version to improve the quality of the solution and also reduce the running time.
For example in the {\it{for}} loop that for each vertex $v_i$ we choose a neighbor with maximum weight, we can have the following modification. For each vertex $v_i$ if it is not dominated choose a neighbor $u$ with maximum weight and dominate the vertices that are adjacent to $u$.
This modification can improve the quality of solution and the running time since for the vertices that are dominated in the previous steps we do not need to find the neighbor with maximum weight. Also for the running time we have not tried to use fast algorithms for finding the neighbors within distance $k$.
But we have not done the above modifications because we want to show the efficiency and easy implementation of Algorithm \ref{alg1} in the distributed model.

\section{Concluding remarks}
In this paper we proposed a distributed algorithm for the MDS problem and also $k$-distance dominating set problem.
Theoretically, for MDS problem we analyzed the approximation factor of Algorithm \ref{alg1} for triangle free planar graphs. 
As it can be seen in the proofs, the restriction for not having triangles or $4$-cycles (in \cite{ali}) is used to get a graph without multiple edges. Hence for a planar graph $G$ or even a general graph $G$ with small crossing number, we can still use this algorithm to get a possibly worse approximation factor. The crossing number  of a graph $G$ is the lowest number of edge crossings of a plane drawing of the graph $G$. For instance, a graph is planar if and only if its crossing number is zero.
In practice as an example of such networks  consider the network of streets in an urban area where the streets are edges and the cross points of streets are nodes. This graph is  planar graph and it  has a few number of triangles.

Also, as the experiments show by increasing $m$ in our algorithm the quality of the solution increases. It is an interesting problem to theoretically see how the approximation factors given in Theorem \ref{main} change if one increases $m$. The experiments suggest that for small values of $m$ this improvement is significant and as $m$ increases to bigger numbers, the improvements are marginal.

Also in a future work, one can modify the algorithm to solve the set-cover problem. In the set-cover problem we are given a set $A=\{a_1,a_2,\dots, a_n\}$ of $n$ elements and $m$ subsets, $A_1,A_2,\dots,A_m$ of $A$. The goal is to choose the minimum number of subsets that they cover all the elements of $A$. Again simply each element $a_i$ chooses a  subset $A_j$ with maximum size such that $a_i\in A_j$. Then $x_i$'s are the number of times that $A_i$'s are chosen by the elements. Then for $m$ rounds we repeat the algorithm.

Note that beside the fact that this algorithm is distributed we can easily implement it in a dynamic model where the network is changing dynamically. This is the case that happens in social networks constantly. When an edge is added to the network, we can easily update the answer based on the added vertex. Specially in the case of $m=0$, it is enough to change the answer of those vertices that are connected to the added vertex and their neighbors. So the update can be done very fast.

Here in the first step of algorithm, we consider the neighbor vertices with maximum degrees. However based on the input graph one can modify the algorithm and choose the initial vertices based on other properties.

\bibliographystyle{ACM-Reference-Format}
\bibliography{sample-base}


\begin{thebibliography}{31}


\ifx \showCODEN    \undefined \def \showCODEN     #1{\unskip}     \fi
\ifx \showDOI      \undefined \def \showDOI       #1{#1}\fi
\ifx \showISBNx    \undefined \def \showISBNx     #1{\unskip}     \fi
\ifx \showISBNxiii \undefined \def \showISBNxiii  #1{\unskip}     \fi
\ifx \showISSN     \undefined \def \showISSN      #1{\unskip}     \fi
\ifx \showLCCN     \undefined \def \showLCCN      #1{\unskip}     \fi
\ifx \shownote     \undefined \def \shownote      #1{#1}          \fi
\ifx \showarticletitle \undefined \def \showarticletitle #1{#1}   \fi
\ifx \showURL      \undefined \def \showURL       {\relax}        \fi
\providecommand\bibfield[2]{#2}
\providecommand\bibinfo[2]{#2}
\providecommand\natexlab[1]{#1}
\providecommand\showeprint[2][]{arXiv:#2}

\bibitem[\protect\citeauthoryear{Abu{-}Khzam, Cai, Egan, Shaw, and
  Wang}{Abu{-}Khzam et~al\mbox{.}}{2017}]%
        {abu}
\bibfield{author}{\bibinfo{person}{Faisal~N. Abu{-}Khzam},
  \bibinfo{person}{Shaowei Cai}, \bibinfo{person}{Judith Egan},
  \bibinfo{person}{Peter Shaw}, {and} \bibinfo{person}{Kai Wang}.}
  \bibinfo{year}{2017}\natexlab{}.
\newblock \showarticletitle{Turbo-Charging Dominating Set with an {FPT}
  Subroutine: Further Improvements and Experimental Analysis}. In
  \bibinfo{booktitle}{\emph{Theory and Applications of Models of Computation -
  14th Annual Conference, {TAMC} 2017, Bern, Switzerland, April 20-22, 2017,
  Proceedings}}. \bibinfo{pages}{59--70}.
\newblock
\urldef\tempurl%
\url{https://doi.org/10.1007/978-3-319-55911-7\_5}
\showDOI{\tempurl}


\bibitem[\protect\citeauthoryear{Alipour and Jafari}{Alipour and
  Jafari}{2020}]%
        {ali}
\bibfield{author}{\bibinfo{person}{Sharareh Alipour} {and}
  \bibinfo{person}{Amir Jafari}.} \bibinfo{year}{2020}\natexlab{}.
\newblock \showarticletitle{A {LOCAL} Constant Approximation Factor Algorithm
  for Minimum Dominating Set of Certain Planar Graphs}. In
  \bibinfo{booktitle}{\emph{{SPAA} '20: 32nd {ACM} Symposium on Parallelism in
  Algorithms and Architectures, Virtual Event, USA, July 15-17, 2020}}.
  \bibinfo{pages}{501--502}.
\newblock
\urldef\tempurl%
\url{https://doi.org/10.1145/3350755.3400217}
\showDOI{\tempurl}


\bibitem[\protect\citeauthoryear{Amiri, Schmid, and Siebertz}{Amiri
  et~al\mbox{.}}{2019}]%
        {akh}
\bibfield{author}{\bibinfo{person}{Saeed~Akhoondian Amiri},
  \bibinfo{person}{Stefan Schmid}, {and} \bibinfo{person}{Sebastian Siebertz}.}
  \bibinfo{year}{2019}\natexlab{}.
\newblock \showarticletitle{Distributed Dominating Set Approximations beyond
  Planar Graphs}.
\newblock \bibinfo{journal}{\emph{{ACM} Trans. Algorithms}}
  \bibinfo{volume}{15}, \bibinfo{number}{3} (\bibinfo{year}{2019}),
  \bibinfo{pages}{39:1--39:18}.
\newblock
\urldef\tempurl%
\url{https://doi.org/10.1145/3326170}
\showDOI{\tempurl}


\bibitem[\protect\citeauthoryear{Bouamama and Blum}{Bouamama and Blum}{2016}]%
        {blum}
\bibfield{author}{\bibinfo{person}{Salim Bouamama} {and}
  \bibinfo{person}{Christian Blum}.} \bibinfo{year}{2016}\natexlab{}.
\newblock \showarticletitle{A hybrid algorithmic model for the minimum weight
  dominating set problem}.
\newblock \bibinfo{journal}{\emph{Simul. Model. Pract. Theory}}
  \bibinfo{volume}{64} (\bibinfo{year}{2016}), \bibinfo{pages}{57--68}.
\newblock
\urldef\tempurl%
\url{https://doi.org/10.1016/j.simpat.2015.11.001}
\showDOI{\tempurl}


\bibitem[\protect\citeauthoryear{Cai, Hou, Wang, Luo, and Lin}{Cai
  et~al\mbox{.}}{2020}]%
        {cai20}
\bibfield{author}{\bibinfo{person}{Shaowei Cai}, \bibinfo{person}{Wenying Hou},
  \bibinfo{person}{Yiyuan Wang}, \bibinfo{person}{Chuan Luo}, {and}
  \bibinfo{person}{Qingwei Lin}.} \bibinfo{year}{2020}\natexlab{}.
\newblock \showarticletitle{Two-goal Local Search and Inference Rules for
  Minimum Dominating Set}. In \bibinfo{booktitle}{\emph{Proceedings of the
  Twenty-Ninth International Joint Conference on Artificial Intelligence,
  {IJCAI} 2020}}. \bibinfo{pages}{1467--1473}.
\newblock
\urldef\tempurl%
\url{https://doi.org/10.24963/ijcai.2020/204}
\showDOI{\tempurl}


\bibitem[\protect\citeauthoryear{Cai, Su, and Sattar}{Cai
  et~al\mbox{.}}{2011}]%
        {cai11}
\bibfield{author}{\bibinfo{person}{Shaowei Cai}, \bibinfo{person}{Kaile Su},
  {and} \bibinfo{person}{Abdul Sattar}.} \bibinfo{year}{2011}\natexlab{}.
\newblock \showarticletitle{Local search with edge weighting and configuration
  checking heuristics for minimum vertex cover}.
\newblock \bibinfo{journal}{\emph{Artif. Intell.}} \bibinfo{volume}{175},
  \bibinfo{number}{9-10} (\bibinfo{year}{2011}), \bibinfo{pages}{1672--1696}.
\newblock
\urldef\tempurl%
\url{https://doi.org/10.1016/j.artint.2011.03.003}
\showDOI{\tempurl}


\bibitem[\protect\citeauthoryear{Campan, Truta, and Beckerich}{Campan
  et~al\mbox{.}}{2016}]%
        {cam}
\bibfield{author}{\bibinfo{person}{Alina Campan},
  \bibinfo{person}{Traian~Marius Truta}, {and} \bibinfo{person}{Matthew
  Beckerich}.} \bibinfo{year}{2016}\natexlab{}.
\newblock \showarticletitle{Approximation algorithms for d-hop dominating set
  problem}. In \bibinfo{booktitle}{\emph{Proceedings of the International
  Conference on Data Mining (DMIN)}}. The Steering Committee of The World
  Congress in Computer Science, Computer~…, \bibinfo{pages}{86}.
\newblock


\bibitem[\protect\citeauthoryear{Chalupa}{Chalupa}{2018}]%
        {chal}
\bibfield{author}{\bibinfo{person}{David Chalupa}.}
  \bibinfo{year}{2018}\natexlab{}.
\newblock \showarticletitle{An order-based algorithm for minimum dominating set
  with application in graph mining}.
\newblock \bibinfo{journal}{\emph{Inf. Sci.}}  \bibinfo{volume}{426}
  (\bibinfo{year}{2018}), \bibinfo{pages}{101--116}.
\newblock
\urldef\tempurl%
\url{https://doi.org/10.1016/j.ins.2017.10.033}
\showDOI{\tempurl}


\bibitem[\protect\citeauthoryear{Chaurasia and Singh}{Chaurasia and
  Singh}{2015}]%
        {sach}
\bibfield{author}{\bibinfo{person}{Sachchida~Nand Chaurasia} {and}
  \bibinfo{person}{Alok Singh}.} \bibinfo{year}{2015}\natexlab{}.
\newblock \showarticletitle{A hybrid evolutionary algorithm with guided
  mutation for minimum weight dominating set}.
\newblock \bibinfo{journal}{\emph{Appl. Intell.}} \bibinfo{volume}{43},
  \bibinfo{number}{3} (\bibinfo{year}{2015}), \bibinfo{pages}{512--529}.
\newblock
\urldef\tempurl%
\url{https://doi.org/10.1007/s10489-015-0654-1}
\showDOI{\tempurl}


\bibitem[\protect\citeauthoryear{Czygrinow, Hanckowiak, and
  Wawrzyniak}{Czygrinow et~al\mbox{.}}{2008}]%
        {and}
\bibfield{author}{\bibinfo{person}{Andrzej Czygrinow}, \bibinfo{person}{Michal
  Hanckowiak}, {and} \bibinfo{person}{Wojciech Wawrzyniak}.}
  \bibinfo{year}{2008}\natexlab{}.
\newblock \showarticletitle{Fast Distributed Approximations in Planar Graphs}.
  In \bibinfo{booktitle}{\emph{Distributed Computing, 22nd International
  Symposium, {DISC} 2008, Arcachon, France, September 22-24, 2008.
  Proceedings}}. \bibinfo{pages}{78--92}.
\newblock
\urldef\tempurl%
\url{https://doi.org/10.1007/978-3-540-87779-0\_6}
\showDOI{\tempurl}


\bibitem[\protect\citeauthoryear{Fan, Lai, Li, Li, Ma, Zhou, Latecki, and
  Su}{Fan et~al\mbox{.}}{2019}]%
        {fan}
\bibfield{author}{\bibinfo{person}{Yi Fan}, \bibinfo{person}{Yongxuan Lai},
  \bibinfo{person}{Chengqian Li}, \bibinfo{person}{Nan Li},
  \bibinfo{person}{Zongjie Ma}, \bibinfo{person}{Jun Zhou},
  \bibinfo{person}{Longin~Jan Latecki}, {and} \bibinfo{person}{Kaile Su}.}
  \bibinfo{year}{2019}\natexlab{}.
\newblock \showarticletitle{Efficient Local Search for Minimum Dominating Sets
  in Large Graphs}. In \bibinfo{booktitle}{\emph{Database Systems for Advanced
  Applications - 24th International Conference, {DASFAA} 2019, Chiang Mai,
  Thailand, April 22-25, 2019, Proceedings, Part {II}}}.
  \bibinfo{pages}{211--228}.
\newblock
\urldef\tempurl%
\url{https://doi.org/10.1007/978-3-030-18579-4\_13}
\showDOI{\tempurl}


\bibitem[\protect\citeauthoryear{Garey and Johnson}{Garey and Johnson}{1979}]%
        {gar}
\bibfield{author}{\bibinfo{person}{M.~R. Garey} {and} \bibinfo{person}{David~S.
  Johnson}.} \bibinfo{year}{1979}\natexlab{}.
\newblock \bibinfo{booktitle}{\emph{Computers and Intractability: {A} Guide to
  the Theory of NP-Completeness}}.
\newblock \bibinfo{publisher}{W. H. Freeman}.
\newblock
\showISBNx{0-7167-1044-7}


\bibitem[\protect\citeauthoryear{Gast, Hauptmann, and Karpinski}{Gast
  et~al\mbox{.}}{2015}]%
        {gas}
\bibfield{author}{\bibinfo{person}{Mikael Gast}, \bibinfo{person}{Mathias
  Hauptmann}, {and} \bibinfo{person}{Marek Karpinski}.}
  \bibinfo{year}{2015}\natexlab{}.
\newblock \showarticletitle{Inapproximability of dominating set on power law
  graphs}.
\newblock \bibinfo{journal}{\emph{Theor. Comput. Sci.}}  \bibinfo{volume}{562}
  (\bibinfo{year}{2015}), \bibinfo{pages}{436--452}.
\newblock
\urldef\tempurl%
\url{https://doi.org/10.1016/j.tcs.2014.10.021}
\showDOI{\tempurl}


\bibitem[\protect\citeauthoryear{Gulati and Eirinaki}{Gulati and
  Eirinaki}{2019}]%
        {avn}
\bibfield{author}{\bibinfo{person}{Avni Gulati} {and}
  \bibinfo{person}{Magdalini Eirinaki}.} \bibinfo{year}{2019}\natexlab{}.
\newblock \showarticletitle{With a Little Help from My Friends (and Their
  Friends): Influence Neighborhoods for Social Recommendations}. In
  \bibinfo{booktitle}{\emph{The World Wide Web Conference, {WWW} 2019, San
  Francisco, CA, USA, May 13-17, 2019}}. \bibinfo{pages}{2778--2784}.
\newblock
\urldef\tempurl%
\url{https://doi.org/10.1145/3308558.3313745}
\showDOI{\tempurl}


\bibitem[\protect\citeauthoryear{Hedar and Ismail}{Hedar and Ismail}{2010}]%
        {her}
\bibfield{author}{\bibinfo{person}{Abdel{-}Rahman Hedar} {and}
  \bibinfo{person}{Rashad Ismail}.} \bibinfo{year}{2010}\natexlab{}.
\newblock \showarticletitle{Hybrid Genetic Algorithm for Minimum Dominating Set
  Problem}. In \bibinfo{booktitle}{\emph{Computational Science and Its
  Applications - {ICCSA} 2010, International Conference, Fukuoka, Japan, March
  23-26, 2010, Proceedings, Part {IV}}}. \bibinfo{pages}{457--467}.
\newblock
\urldef\tempurl%
\url{https://doi.org/10.1007/978-3-642-12189-0\_40}
\showDOI{\tempurl}


\bibitem[\protect\citeauthoryear{Hilke, Lenzen, and Suomela}{Hilke
  et~al\mbox{.}}{2014}]%
        {hike}
\bibfield{author}{\bibinfo{person}{Miikka Hilke}, \bibinfo{person}{Christoph
  Lenzen}, {and} \bibinfo{person}{Jukka Suomela}.}
  \bibinfo{year}{2014}\natexlab{}.
\newblock \showarticletitle{Brief announcement: local approximability of
  minimum dominating set on planar graphs}. In \bibinfo{booktitle}{\emph{{ACM}
  Symposium on Principles of Distributed Computing, {PODC} '14, Paris, France,
  July 15-18, 2014}}. \bibinfo{pages}{344--346}.
\newblock
\urldef\tempurl%
\url{https://doi.org/10.1145/2611462.2611504}
\showDOI{\tempurl}


\bibitem[\protect\citeauthoryear{Karp}{Karp}{1972}]%
        {karp}
\bibfield{author}{\bibinfo{person}{Richard~M. Karp}.}
  \bibinfo{year}{1972}\natexlab{}.
\newblock \showarticletitle{Reducibility Among Combinatorial Problems}. In
  \bibinfo{booktitle}{\emph{Proceedings of a symposium on the Complexity of
  Computer Computations, held March 20-22, 1972, at the {IBM} Thomas J. Watson
  Research Center, Yorktown Heights, New York, {USA}}}
  \emph{(\bibinfo{series}{The {IBM} Research Symposia Series})},
  \bibfield{editor}{\bibinfo{person}{Raymond~E. Miller} {and}
  \bibinfo{person}{James~W. Thatcher}} (Eds.). \bibinfo{publisher}{Plenum
  Press, New York}, \bibinfo{pages}{85--103}.
\newblock
\urldef\tempurl%
\url{https://doi.org/10.1007/978-1-4684-2001-2\_9}
\showDOI{\tempurl}


\bibitem[\protect\citeauthoryear{{Karthik {C. S.}}, Laekhanukit, and
  Manurangsi}{{Karthik {C. S.}} et~al\mbox{.}}{2018}]%
        {kar}
\bibfield{author}{\bibinfo{person}{{Karthik {C. S.}}}, \bibinfo{person}{Bundit
  Laekhanukit}, {and} \bibinfo{person}{Pasin Manurangsi}.}
  \bibinfo{year}{2018}\natexlab{}.
\newblock \showarticletitle{On the parameterized complexity of approximating
  dominating set}. In \bibinfo{booktitle}{\emph{Proceedings of the 50th Annual
  {ACM} {SIGACT} Symposium on Theory of Computing, {STOC} 2018, Los Angeles,
  CA, USA, June 25-29, 2018}}. \bibinfo{pages}{1283--1296}.
\newblock
\urldef\tempurl%
\url{https://doi.org/10.1145/3188745.3188896}
\showDOI{\tempurl}


\bibitem[\protect\citeauthoryear{Kuhn, Moscibroda, and Wattenhofer}{Kuhn
  et~al\mbox{.}}{2016}]%
        {kuhn}
\bibfield{author}{\bibinfo{person}{Fabian Kuhn}, \bibinfo{person}{Thomas
  Moscibroda}, {and} \bibinfo{person}{Roger Wattenhofer}.}
  \bibinfo{year}{2016}\natexlab{}.
\newblock \showarticletitle{Local Computation: Lower and Upper Bounds}.
\newblock \bibinfo{journal}{\emph{J. {ACM}}} \bibinfo{volume}{63},
  \bibinfo{number}{2} (\bibinfo{year}{2016}), \bibinfo{pages}{17:1--17:44}.
\newblock
\urldef\tempurl%
\url{https://doi.org/10.1145/2742012}
\showDOI{\tempurl}


\bibitem[\protect\citeauthoryear{Lenzen, Pignolet, and Wattenhofer}{Lenzen
  et~al\mbox{.}}{2013}]%
        {chris}
\bibfield{author}{\bibinfo{person}{Christoph Lenzen},
  \bibinfo{person}{Yvonne~Anne Pignolet}, {and} \bibinfo{person}{Roger
  Wattenhofer}.} \bibinfo{year}{2013}\natexlab{}.
\newblock \showarticletitle{Distributed minimum dominating set approximations
  in restricted families of graphs}.
\newblock \bibinfo{journal}{\emph{Distributed Computing}} \bibinfo{volume}{26},
  \bibinfo{number}{2} (\bibinfo{year}{2013}), \bibinfo{pages}{119--137}.
\newblock
\urldef\tempurl%
\url{https://doi.org/10.1007/s00446-013-0186-z}
\showDOI{\tempurl}


\bibitem[\protect\citeauthoryear{Lin, Zhu, and Ali}{Lin et~al\mbox{.}}{2016}]%
        {gen}
\bibfield{author}{\bibinfo{person}{Geng Lin}, \bibinfo{person}{Wenxing Zhu},
  {and} \bibinfo{person}{M.~Montaz Ali}.} \bibinfo{year}{2016}\natexlab{}.
\newblock \showarticletitle{An Effective Hybrid Memetic Algorithm for the
  Minimum Weight Dominating Set Problem}.
\newblock \bibinfo{journal}{\emph{{IEEE} Trans. Evol. Comput.}}
  \bibinfo{volume}{20}, \bibinfo{number}{6} (\bibinfo{year}{2016}),
  \bibinfo{pages}{892--907}.
\newblock
\urldef\tempurl%
\url{https://doi.org/10.1109/TEVC.2016.2538819}
\showDOI{\tempurl}


\bibitem[\protect\citeauthoryear{Nguyen, H{\`a}, Nguyen, et~al\mbox{.}}{Nguyen
  et~al\mbox{.}}{2020}]%
        {min}
\bibfield{author}{\bibinfo{person}{Minh~Hai Nguyen},
  \bibinfo{person}{Minh~Ho{\`a}ng H{\`a}}, \bibinfo{person}{Diep~N Nguyen},
  {et~al\mbox{.}}} \bibinfo{year}{2020}\natexlab{}.
\newblock \showarticletitle{Solving the k-dominating set problem on very
  large-scale networks}.
\newblock \bibinfo{journal}{\emph{Computational Social Networks}}
  \bibinfo{volume}{7}, \bibinfo{number}{1} (\bibinfo{year}{2020}),
  \bibinfo{pages}{1--15}.
\newblock


\bibitem[\protect\citeauthoryear{Potluri and Singh}{Potluri and Singh}{2011}]%
        {pot11}
\bibfield{author}{\bibinfo{person}{Anupama Potluri} {and} \bibinfo{person}{Alok
  Singh}.} \bibinfo{year}{2011}\natexlab{}.
\newblock \showarticletitle{Two Hybrid Meta-heuristic Approaches for Minimum
  Dominating Set Problem}. In \bibinfo{booktitle}{\emph{Swarm, Evolutionary,
  and Memetic Computing - Second International Conference, {SEMCCO} 2011,
  Visakhapatnam, Andhra Pradesh, India, December 19-21, 2011, Proceedings, Part
  {II}}}. \bibinfo{pages}{97--104}.
\newblock
\urldef\tempurl%
\url{https://doi.org/10.1007/978-3-642-27242-4\_12}
\showDOI{\tempurl}


\bibitem[\protect\citeauthoryear{Potluri and Singh}{Potluri and Singh}{2013}]%
        {pot13}
\bibfield{author}{\bibinfo{person}{Anupama Potluri} {and} \bibinfo{person}{Alok
  Singh}.} \bibinfo{year}{2013}\natexlab{}.
\newblock \showarticletitle{Hybrid metaheuristic algorithms for minimum weight
  dominating set}.
\newblock \bibinfo{journal}{\emph{Appl. Soft Comput.}} \bibinfo{volume}{13},
  \bibinfo{number}{1} (\bibinfo{year}{2013}), \bibinfo{pages}{76--88}.
\newblock
\urldef\tempurl%
\url{https://doi.org/10.1016/j.asoc.2012.07.009}
\showDOI{\tempurl}


\bibitem[\protect\citeauthoryear{Raz and Safra}{Raz and Safra}{1997}]%
        {raz}
\bibfield{author}{\bibinfo{person}{Ran Raz} {and} \bibinfo{person}{Shmuel
  Safra}.} \bibinfo{year}{1997}\natexlab{}.
\newblock \showarticletitle{A Sub-Constant Error-Probability Low-Degree Test,
  and a Sub-Constant Error-Probability {PCP} Characterization of {NP}}. In
  \bibinfo{booktitle}{\emph{Proceedings of the Twenty-Ninth Annual {ACM}
  Symposium on the Theory of Computing, El Paso, Texas, USA, May 4-6, 1997}}.
  \bibinfo{pages}{475--484}.
\newblock
\urldef\tempurl%
\url{https://doi.org/10.1145/258533.258641}
\showDOI{\tempurl}


\bibitem[\protect\citeauthoryear{Sanchis}{Sanchis}{2002}]%
        {san}
\bibfield{author}{\bibinfo{person}{Laura~A. Sanchis}.}
  \bibinfo{year}{2002}\natexlab{}.
\newblock \showarticletitle{Experimental Analysis of Heuristic Algorithms for
  the Dominating Set Problem}.
\newblock \bibinfo{journal}{\emph{Algorithmica}} \bibinfo{volume}{33},
  \bibinfo{number}{1} (\bibinfo{year}{2002}), \bibinfo{pages}{3--18}.
\newblock
\urldef\tempurl%
\url{https://doi.org/10.1007/s00453-001-0101-z}
\showDOI{\tempurl}


\bibitem[\protect\citeauthoryear{Suomela}{Suomela}{2013}]%
        {suo}
\bibfield{author}{\bibinfo{person}{Jukka Suomela}.}
  \bibinfo{year}{2013}\natexlab{}.
\newblock \showarticletitle{Survey of local algorithms}.
\newblock \bibinfo{journal}{\emph{{ACM} Comput. Surv.}} \bibinfo{volume}{45},
  \bibinfo{number}{2} (\bibinfo{year}{2013}), \bibinfo{pages}{24:1--24:40}.
\newblock
\urldef\tempurl%
\url{https://doi.org/10.1145/2431211.2431223}
\showDOI{\tempurl}


\bibitem[\protect\citeauthoryear{van Rooij and Bodlaender}{van Rooij and
  Bodlaender}{2011}]%
        {van}
\bibfield{author}{\bibinfo{person}{Johan M.~M. van Rooij} {and}
  \bibinfo{person}{Hans~L. Bodlaender}.} \bibinfo{year}{2011}\natexlab{}.
\newblock \showarticletitle{Exact algorithms for dominating set}.
\newblock \bibinfo{journal}{\emph{Discret. Appl. Math.}} \bibinfo{volume}{159},
  \bibinfo{number}{17} (\bibinfo{year}{2011}), \bibinfo{pages}{2147--2164}.
\newblock
\urldef\tempurl%
\url{https://doi.org/10.1016/j.dam.2011.07.001}
\showDOI{\tempurl}


\bibitem[\protect\citeauthoryear{Wang, Cai, Chen, and Yin}{Wang
  et~al\mbox{.}}{2018}]%
        {wang18}
\bibfield{author}{\bibinfo{person}{Yiyuan Wang}, \bibinfo{person}{Shaowei Cai},
  \bibinfo{person}{Jiejiang Chen}, {and} \bibinfo{person}{Minghao Yin}.}
  \bibinfo{year}{2018}\natexlab{}.
\newblock \showarticletitle{A Fast Local Search Algorithm for Minimum Weight
  Dominating Set Problem on Massive Graphs}. In
  \bibinfo{booktitle}{\emph{Proceedings of the Twenty-Seventh International
  Joint Conference on Artificial Intelligence, {IJCAI} 2018, July 13-19, 2018,
  Stockholm, Sweden}}. \bibinfo{pages}{1514--1522}.
\newblock
\urldef\tempurl%
\url{https://doi.org/10.24963/ijcai.2018/210}
\showDOI{\tempurl}


\bibitem[\protect\citeauthoryear{Wang, Cai, and Yin}{Wang
  et~al\mbox{.}}{2017}]%
        {w}
\bibfield{author}{\bibinfo{person}{Yiyuan Wang}, \bibinfo{person}{Shaowei Cai},
  {and} \bibinfo{person}{Minghao Yin}.} \bibinfo{year}{2017}\natexlab{}.
\newblock \showarticletitle{Local Search for Minimum Weight Dominating Set with
  Two-Level Configuration Checking and Frequency Based Scoring Function}.
\newblock \bibinfo{journal}{\emph{J. Artif. Intell. Res.}}
  \bibinfo{volume}{58} (\bibinfo{year}{2017}), \bibinfo{pages}{267--295}.
\newblock
\urldef\tempurl%
\url{https://doi.org/10.1613/jair.5205}
\showDOI{\tempurl}


\bibitem[\protect\citeauthoryear{Wawrzyniak}{Wawrzyniak}{2013}]%
        {waw}
\bibfield{author}{\bibinfo{person}{Wojciech Wawrzyniak}.}
  \bibinfo{year}{2013}\natexlab{}.
\newblock \showarticletitle{Brief announcement: a local approximation algorithm
  for {MDS} problem in anonymous planar networks}. In
  \bibinfo{booktitle}{\emph{{ACM} Symposium on Principles of Distributed
  Computing, {PODC} '13, Montreal, QC, Canada, July 22-24, 2013}}.
  \bibinfo{pages}{406--408}.
\newblock
\urldef\tempurl%
\url{https://doi.org/10.1145/2484239.2484281}
\showDOI{\tempurl}


\end{thebibliography}

\end{document}